\newtheorem{theorem}{Theorem}[section]
\newtheorem{proposition}{Proposition}
\theoremstyle{definition}
\newtheorem{remark}{Remark}
\DeclareMathOperator{\supp}{supp}
\title[Micro solutions of the B-E equation] 
      {Microscopic solutions of the Boltzmann--Enskog equation in the series representation}
\author[Mario Pulvirenti, Sergio Simonella and Anton Trushechkin]{}
\subjclass{Primary: 82C05, 82C40; Secondary: 35Q20.}
 \keywords{Kinetic theory of gases, hard spheres, Boltzmann--Enskog equation, empirical measure, microscopic solutions.}
 \email{pulviren@mat.uniroma1.it}
 \email{sergio.simonella@ens-lyon.fr}
 \email{trushechkin@mi.ras.ru}
\begin{document}
\maketitle

\centerline{\scshape Mario Pulvirenti}
\medskip
{\footnotesize
 \centerline{International Research Center M\&MOCS, Universit\`a dell'Aquila}
   \centerline{Palazzo Caetani, Cisterna di Latina, (LT) 04012 Italy}
} 

\medskip

\centerline{\scshape Sergio Simonella}
\medskip
{\footnotesize
 \centerline{CNRS and UMPA (UMR CNRS 5669), \'{E}cole Normale Sup\'{e}rieure de Lyon}
   \centerline{46 all\'{e}e dÕItalie, 69364 Lyon Cedex 07, France}
}

\medskip

\centerline{\scshape Anton Trushechkin}
\medskip
{\footnotesize
 \centerline{Steklov Mathematical Institute of Russian Academy of Sciences}
 \centerline{Gubkina Street 8, Moscow 119991, Russia;}
   \centerline{National Research Nuclear University MEPhI}
      \centerline{Kashirskoe Highway 31, Moscow 115409, Russia;}
   \centerline{National University of Science and Technology MISIS}
      \centerline{Leninsky Avenue 2, Moscow 119049, Russia}
}

\bigskip


\begin{abstract}
The Boltzmann--Enskog equation for a hard sphere gas is known to have so called microscopic solutions, i.e., solutions of the form of time-evolving empirical measures of a finite number of hard spheres. However, the precise mathematical meaning of these solutions should be discussed, since the formal substitution of empirical measures into the equation is not well-defined. Here we give a rigorous mathematical meaning to the microscopic solutions to the Boltzmann--Enskog equation by means of a suitable series representation.
\end{abstract}

\section{Introduction}

The present paper is devoted to the Boltzmann--Enskog equation, which describes the kinetics of a hard sphere gas:
\begin{equation}\label{eq:bee}
\begin{split}
(\partial_t+v\cdot \nabla_x)&f(x,v,t) = \lambda\int_{\mathbb R^3\times S^2_+} dv_1 d\omega \ (v-v_1)\cdot\omega  \\
&\times \Big\{f(x-a\omega,v_1',t)f(x,v',t)- f(x+a\omega,v_1,t)f(x,v,t)\Big\}\;,
\end{split}
\end{equation}
where the unknown  $f=f(x,v,t)\geq0$ denotes the density function of the system, $x\in\mathbb R^3$, $v\in\mathbb R^3$ and $t\in\mathbb R$ denote position, velocity and time respectively. Moreover $a>0$ is the diameter of a hard sphere. If the function is normalized, i.e., $\int_{\mathbb R^6}f(x,v,t)\,dxdv=1$, then $f$ can be understood as a probability density of an arbitrary single hard sphere. Further, $S_+^2=\{\omega \in S^2 |\ (v-v_1)\cdot\omega \geq 0\},$ $S^2$ is the unit sphere in $\mathbb R^3$
(with the surface measure $d\omega$), $(v,v_1)$ is a pair of 
velocities in incoming collision configuration and $(v',v_1')$ is the corresponding pair of outgoing 
velocities defined by the elastic reflection rules (with $\omega$ being the unit vector directed from the center of the first sphere to the center of the second one)
\begin{equation}
\begin{cases}
\displaystyle v'=v-\omega [\omega\cdot(v-v_1)]\\
\displaystyle  v_1'=v_1+\omega[\omega\cdot(v-v_1)].
\end{cases}
\label{eq:coll}
\end{equation}
Finally, $\lambda>0$ is a parameter modulating the collision rate. The integral in (\ref{eq:bee}) is referred to as the collision integral.

This equation differs from the Boltzmann kinetic equation for hard spheres by the terms $\pm a\omega$ in the arguments of $f$ in the collision integral. Namely the Boltzmann equation assumes the size of spheres to be negligibly small (in comparison to the scale of spatial variation of $f$), while the Boltzmann--Enskog equation takes the size of spheres into account. Formally, the Boltzmann equation for hard spheres is obtained from the Boltzmann--Enskog equation in the limit $a\to0$. The convergence of solutions of the Boltzmann--Enskog equation to solutions of the Boltzmann equation is proved in \cite{ArkCerc89,ArkCerc,BelLach88}. 
Thus Eq.\,\eqref{eq:bee} is often regarded as a correction to the Boltzmann equation
which is both a simpler mathematical model and a kinetic description of a\,``dense gas''.

An important problem in the theory of the Boltzmann equation is its derivation from the microscopic dynamics (Hamiltonian particle system), see \cite{Bogol46,Spohn,CIP,Lanford} and, for more recent research, \cite{Bodi,GerasGap,GST,PulvirSimonBG,PSS,De17}. Up to now, the Boltzmann equation for hard spheres and other short-range potentials has been rigorously derived for short times only. 
In contrast to the Boltzmann equation, it is not clear whether an Enskog type equation can be rigorously derived from deterministic dynamics.

 It is interesting that, in case of hard spheres, Bogolyubov's microscopic derivation in \cite{Bogol46}  leads not to the Boltzmann equation but to the Boltzmann--Enskog equation (\ref{eq:bee}). However, the latter equation 
 does not provide a better approximation to the hard sphere dynamics, but just a natural intermediate 
 description \cite{PulvirSimonBG}.

Moreover, the Boltzmann--Enskog equation has an interesting property, which relates it to the microscopic dynamics in a different way. Namely, Bogolyubov discovered \cite{Bogol75} (see also \cite{BogBog}) that, for any finite $N$, the Boltzmann--Enskog equation with $\lambda=Na^2$ has solutions of the form of time-evolving empirical measures of $N$ hard spheres:

\begin{equation}\label{eq:microsol}
\mu_t(dz)=\frac1N\sum_{i=1}^N\delta(z-z_i(t))dz.
\end{equation}
Here $z=(x,v)\in\mathbb R^6$ is a point in the phase space and $z_i(t)=(x_i(t),v_i(t))$ is the phase point of the $i$-th hard sphere determined by free motion between collisions plus elastic pairwise reflections at distance $a$. Setting $\mathbf z(t)=(z_i(t))_{i=1}^N=\mathrm T_N(t)\mathbf z^0 $, and  $\mathbf z^0=(z_i^0)_{i=1}^N$ the time evolved and initial configuration respectively,
the flow $\mathbf z^0\to\mathbf z(t)$ is defined for almost all $\mathbf z^0$ with respect to the Lebesgue measure. In fact configurations leading to collisions of more than two hard spheres simultaneously, to grazing collisions or to infinite collisions in a finite time have zero measure \cite{Alexander,CGP,CIP}.

Solutions of type \eqref{eq:microsol} are referred to as microscopic solutions. It may be surprising that the Boltzmann--Enskog equation\,``contains''\,the $N$-particle dynamics in itself, while the kinetic equation is valid in the limit of large $N$ only.

Such solutions are even more surprising if we recall that the Boltzmann--Enskog equation describes the irreversible dynamics of the gas. Namely, consider the functional
\begin{equation}\label{eq:hfunc}
H(f)=\int_{\mathbb R^6} f\ln f\,dxdv+
\frac{\lambda}{2}\int_{\mathbb R^3}dx\int_{B(x,a)}\rho(x)\rho(y)\,dy,
\end{equation}
where $B(x,a)$ is the ball around $x$ and radius $a$, and $\rho(x)=\int_{\mathbb R^3}f(x,v)\,dv$ is the spatial density. This functional does not increase if $f$ is a solution of the Boltzmann--Enskog equation \cite{ArkCerc}. In contrast, microscopic solutions are reversible in time (i.e., a transformation $t\to-t$ and $v\to-v$ maps a measure of type (\ref{eq:microsol}) to another measure of this type, which is also a solution of (\ref{eq:bee})). Of course, functional (\ref{eq:hfunc}) is not defined on solutions of type (\ref{eq:microsol}), hence, formally, there is no contradiction. However, this shows that the reversibility or irreversibility of the Boltzmann--Enskog equation depends on the considered class of solutions: we have irreversible behaviour if we consider regular solutions, and reversible behaviour in the case of suitable singular measures (\ref{eq:microsol}).

A difficulty with the microscopic solutions (\ref{eq:microsol}) is that their formal substitution into (\ref{eq:bee}) yields products of delta functions and, hence, it is ill-defined. A rigorous sense to these solutions was given in \cite{TrushMIAN} by means of regularizations for delta functions and the collision integral (see also \cite{TrushPad,TrushKRM} for other variants).
On the other hand it was proven in \cite{PulvirSimon} that the empirical distributions \eqref{eq:microsol} (more precisely the family of the empirical marginals) solve the BBGKY hierarchy for hard spheres. Actually, 
Bogolyubov's remark in \cite{Bogol75} was that the Boltzmann--Enskog equation coincides formally with the first equation of the BBGKY hierarchy for hard spheres. 

As discussed in \cite{PulvirSimon}, the approach therein developed, based on the standard notion of Duhamel series solution, has no simple adaptation to \eqref{eq:bee}. In the present paper, we introduce a modified notion of series solution for which \eqref{eq:microsol} does solve the Boltzmann--Enskog equation. Such a notion is based on tree expansions with partially ordered trees, in contrast with the standard expansion on totally ordered trees. This, together with a regularization based on separation of collision times, allows to formulate our main result (Theorem \ref{th} in Section \ref{sec:MsBEE} below). 

 The paper is organized as follows. In Section \ref{sec:series} we introduce the trees and partially ordered trees together with the standard series solution. In Section \ref{sec:MVS} the concept of series solution for measures of type \eqref{eq:microsol} is precisely formulated. We establish also a semigroup property which will be crucial, in Section \ref{sec:MsBEE}, for the proof of Theorem \ref{th}. In the latter section we finally
compare the strategy of \cite{PulvirSimon} with the one of the present paper.

\section{Series solution and tree expansion}\label{sec:series}
\subsection{Series expansion and trees}

We will use the following notations: $\mathbf z_j=(z_1,\ldots,z_j)$, $z_i=(x_i,v_i)$,
$$f^{\otimes j}(\mathbf z_j,t)=f(z_1,t)\cdots f(z_j,t).$$
Also, for shortness, we will write $f(t)$ and  $f^{\otimes j}(t)$ for $f(z_1,t)$ and $f^{\otimes j}(\mathbf z_j,t)$ respectively.  Let us introduce the operators $\mathcal S_i(t)$ and $\mathcal C^{\pm}_{i,j+1}$:
\newpage
$$(\mathcal S_i(t)f^{\otimes j})(\mathbf z_j,t)=f(z_1,t)\cdots f(z_{i-1},t)
f(x_i-v_it,v_i,t)f(z_{i+1},t)\cdots f(z_j,t),$$
\begin{eqnarray*}
&&\left(\mathcal C^+_{i,j+1}f^{\otimes(j+1)}\right)(\mathbf z_j,t)= 
\int_{\mathbb R^3\times S^2_+} dv_{j+1} d\omega_{j+1} \ (v_i-v_{j+1})\cdot\omega_{j+1}  \\ &&\times f(z_1,t)\cdots f(z_{i-1},t)f(x_i,v_i',t)f(z_{i+1},t)\cdots f(z_j,t)f(x_i-a\omega_{j+1},v'_{j+1},t),
\end{eqnarray*}
\begin{eqnarray*}
&&\left( \mathcal C^-_{i,j+1} f^{\otimes (j+1)}\right)(\mathbf z_j,t)=\int_{\mathbb R^3\times S^2_+} dv_{j+1} d\omega_{j+1} \ (v_i-v_{j+1})\cdot\omega_{j+1}  \\
&&\times f(z_1,t)\cdots f(z_{i-1},t)f(x_i ,v_i)f(z_{i+1},t)\cdots f(z_j,t)f(x_i+a\omega_{j+1},v_{j+1},t),
\end{eqnarray*}
$$\mathcal C_{i,j+1}= \mathcal C^+_{i,j+1}- \mathcal C^-_{i,j+1}.$$
We remind that $S_+^2=\{\omega \in S^2 |\ (v-v_1)\cdot\omega \geq 0\}$  and  $S^2$ is the unit sphere in $\mathbb R^3$.

Then, the Boltzmann--Enskog equation (\ref{eq:bee}) can be rewritten as 
$$\left(\partial_t+v_1\cdot\nabla_{x_1}\right)f(t)=
\lambda \mathcal C_{1,2}f^{\otimes 2}(t),$$
or, in the integrated form, as
\begin{equation}\label{eq:beeint}
f(t)=\mathcal S_1(t)f_{0}+\lambda\int_0^tds\,\mathcal S_1(t-s)
\mathcal C_{1,2} f(s)f(s)\;,
\end{equation}
where $f_0\equiv f_0(z_1)$.

If we iterate  equation (\ref{eq:beeint}) (i.e., iteratively substitute (\ref{eq:beeint}) into the right-hand side of itself), we obtain a series solution in the form
\begin{multline}\label{eq:hier}
f_j(t)=\mathcal S_{1\ldots j}f_{0,j}+\sum_{n=1}^\infty\lambda^n
\int_0^tdt_1\int_0^{t_1}dt_2\ldots\int_0^{t_{n-1}}dt_n\\
\mathcal S_{1\ldots j}(t-t_1)\mathcal C_{j+1}
\mathcal S_{1\ldots j+1}(t_1-t_2)\ldots
\mathcal C_{j+n}\mathcal S_{1\ldots j+n}(t_n)f_{0,j+n},
\end{multline}
where $f_j (t)= f^{\otimes j}(t)$, $f_{0,j+n} = f_{0}^{\otimes (j+n)}$,
\begin{equation}\label{eq:cj}
\mathcal C_j=\sum_{l=1}^{j-1}C_{l,j},
\end{equation}and 
$\mathcal S_{1\ldots j}(s)=\mathcal S_{1}(s)\ldots\mathcal S_{j}(s)$. The substitution of (\ref{eq:cj}) into (\ref{eq:hier}) yields
\begin{multline}\label{eq:hier2}
f_j(t)=\mathcal S_{1\ldots j}f_{0,j}+\sum_{n=1}^\infty\lambda^n
{\sum_{\mathbf r_n}}^*
\int_0^tdt_1\int_0^{t_1}dt_2\ldots\int_0^{t_{n-1}}dt_n\\
\mathcal S_{1\ldots j}(t-t_1)\mathcal C_{r_1,j+1}
\mathcal S_{1\ldots j+1}(t_1-t_2)\ldots
\mathcal C_{r_n,j+n}\mathcal S_{1\ldots j+n}(t_n)f_{0,j+n},
\end{multline}
where 
$$
{\sum_{\mathbf r_n}}^*=\sum_{r_1=1}^j\sum_{r_2=1}^{j+1}\cdots
\sum_{r_n=1}^{j+n-1}.
$$
Expansion (\ref{eq:hier2}) leads to a natural tree representation, where $r_i$ is a ``parent'' of the particle $j+i$. (Again, if we are interested only on the single-particle distribution function, we should set $j=1$. Here we are introducing the `hierarchy' of equations for the functions $f_j$, which will be useful in the sequel.)

The collection of integers 
$$
\mathbf {r_n}= \{r_1, \cdots, r_n \}
$$
is called `tree'. The name is justified by the fact that a tree is conveniently represented graphically. For instance
the tree  $\mathbf {r_5}= \{ 1,1,2,3,2 \}$ for $j=1$ is given by Fig.\,\ref{fig:trees} where the $i$-th branch is  generated by particle $r_i$;  see also Figure \ref{fig:trees'}. 

\begin{figure}[h]
\centering
\includegraphics[width=4cm]{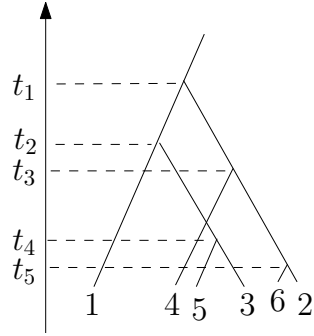}
\caption{: $\mathbf {r_5} = \{1,1,2,3,2\}$.}\label{fig:trees}
\end{figure}

\begin{figure}[h]
\centering
\includegraphics[width=\textwidth]{fig_PST1.pdf} 
\caption{}\label{fig:trees'}
\end{figure}

\begin{remark}\label{rem:trees}
For the tree representation we refer to \cite{Spohn} (``Collision Histories''). Later on this notion has been used frequently by several authors \cite{Pulvir,GST, PSS,PulvirSimon,PulvirSimonBG,Simonella,BGSR13,BGSR15}.
 \end{remark}

 The series expansion \eqref{eq:hier2} can be further specified by splitting any operator ${\mathcal C}_j$ into its positive and negative part. The result is:
 \begin{multline}\label{eq:hier3}
f_j(t)=\mathcal S_{1\ldots j}f_{0,j}+\sum_{n=1}^\infty\lambda^n
{\sum_{\mathbf r_n}}^* \sum_{\bm\sigma_n} \prod_{i=1}^{n} \sigma_i
\int_0^tdt_1\int_0^{t_1}dt_2\ldots\int_0^{t_{n-1}}dt_n\\
\mathcal S_{1\ldots j}(t-t_1)\mathcal C^{\sigma_1}_{r_1,j+1}
\mathcal S_{1\ldots j+1}(t_1-t_2)\ldots
\mathcal C^{\sigma_n}_{r_n,j+n}\mathcal S_{1\ldots j+n}(t_n)f_{0,j+n},
\end{multline}
where $\bm\sigma_n=\{\sigma_1,  \cdots, \sigma_n\}$ and $\sigma_i= \pm$.

\subsection{Backward and forward Boltzmann--Enskog flows}\label{sec:flows}

The operators $C^{\sigma_i}_{r_i,j+i}$ are integrals over $d \omega_{i}$ and $dv_{j+i}$ which, together with $t_i$, will be called the `node variables' (see Fig.\,\ref{fig:trees}).

Given $n$, $\mathbf r_n$, $\bm\sigma_n$ and values of the node variables, we now define the so called \textit{Boltzmann--Enskog backward flow} for the configuration of particles $\bm\zeta(s)$, $s\in[0,t]$,
\begin{equation}\label{eq:weakstart}
\bm\zeta(s)=(\zeta_i(s))_{i\in\mathcal H(s)},\quad
\mathcal H(s)=\{1\}\cup\{i\geq2\,|\, t_{i-1}\geq s\},
\end{equation}
with $\zeta_i(s)=(\xi_i(s),\eta_i(s))$, respectively position and velocity of particle $i$. Let us define the flow. Note that the number of particles is increasing backward in time.

We are considering the case $j=1$ for simplicity.

Firstly, 
\begin{equation}
\bm\zeta(t)=\zeta_1(t)=(x_1,v_1)
\end{equation}
(recall that $(x_1,v_1)$ are the arguments of the function $f_1$ in (\ref{eq:hier3})). At the instant $t$, only the first particle is under consideration. It moves freely, back in time, up to the instant of the first creation $t_1$, i.e.
$$
\bm\zeta(t_1+0)=\zeta_1(t_1+0)=(x_1-v_1t_1,v_1).
$$ 
The time instant $t_1$ corresponds to a contact with the particle~2 with the configuration $(x_1-\sigma_1a\omega_1,v_1)$ (``contact'' means that the distance between the centres of the hard spheres is exactly $a$). We add the configuration of this particle to $\bm\zeta$. If $\sigma_1=-$, then the configuration of the particles is precollisional and both particles continue to move freely (backward in time) with their velocities $(v_1,v_2)$, so that
$$\bm\zeta(t_1-0)=(\zeta_1(t_1-0), \zeta_2(t_1-0))=
\big((x_1-v_1t_1,v_1),(x_1-v_1t_1+a\omega_1,v_2)\big).$$ 
If $\sigma_1=+$, then the configuration of the particles is postcollisional, and the particles continue to move freely (backward in time) with the precollisional velocities $(v'_1,v'_2)$:
$$\bm\zeta(t_1-0)=(\zeta_1(t_1-0), \zeta_2(t_1-0))=
\big((x_1-v_1t_1,v'_1),(x_1-v_1t_1-a\omega_1,v'_2)\big).$$ 
Both particles continue to move freely (backward in time) up to their next collision, and so on. Since $r_i$ is the ``parent particle'' of the particle~$i$, $i=2,\ldots,n+1$, the general formula reads
\begin{equation}
\zeta_i(t_{i-1}-0)=
\begin{cases}
(\xi_{r_i}(t_{i-1})+a\omega_{i-1},v_i),&\sigma_{i-1}=-,\\
(\xi_{r_i}(t_{i-1})-a\omega_{i-1},v'_i),&\sigma_{i-1}=+\;,
\end{cases}
\end{equation}
for $i=2,\ldots,n+1$.

The above prescription defines the backward Boltzmann--Enskog flow. 

Thus, given $t>0$, we have a $(\mathbf r_n,\bm\sigma_n)$-dependent map
\begin{equation}\label{eq:bflow}
(x_1,v_1,\mathbf t_n,\bm\omega_n,\mathbf v_n)\longmapsto \bm\zeta(0),
\end{equation}
where $\mathbf t_n=(t_1,\ldots,t_n)$, $\bm\omega_n=(\omega_1,\ldots,\omega_n)$, $\mathbf v_n=(v_2,\ldots,v_{n+1})$. 
This map is important since it appears in formula \eqref{eq:hier3}, by writing explicitly the transport and collision operators.

The transformation (\ref{eq:bflow}) is a Borel map and we will denote its image as 
$\tilde A_{\mathbf r_n\bm\sigma_n}(t)$. 
Obviously, the image for $n=0$ is $\mathbb R^6$. The Jacobian determinant of the transformation is in modulus 
$a^{2n}\prod_{i=1}^n| \omega_i \cdot (v_{i+1}-\eta_{r_i}(t_i+0))|$, so that the map induces the equivalence of measures

$$
dx_1dv_1d\Lambda\, a^{2n}\prod_{i=1}^n| \omega_i \cdot (v_{i+1}-\eta_{r_i}(t_i+0))|=
d\bm\zeta(0),
$$
where
\begin{equation*}
d\Lambda(\mathbf t_n,\bm\omega_n,\mathbf v_n)=\chi_{\mathbf r_n}(\mathbf t_n)
dt_1\ldots dt_nd\omega_1\ldots d\omega_ndv_2\ldots dv_{n+1},
\end{equation*}

\begin{equation*}
\chi_{\mathbf r_n}(\mathbf t_n)=
\begin{cases} 
1,&0< t_{i+1}< t_i \quad i=0,\ldots,n,\, t_0=t, t_{n+1}=0 \\
0,&\text{otherwise.}
\end{cases}
\end{equation*}

We will use also the inverse of map  (\ref{eq:bflow}). To construct it, we introduce the \textit{Boltzmann--Enskog forward flow}, dependent on $(\mathbf k_n,\bm\sigma_n)$:
\begin{equation}
\label{map}
\bm\zeta(0)\longmapsto\left(\bm\zeta^F(s)\right)_{s\in[0,t]}=
\left(\bm\zeta^F(s,\bm\zeta(0),\mathbf r_n,\bm\sigma_n)\right)_{s\in[0,t]},
\end{equation}
where $\bm\zeta(0)\in \tilde A_{\mathbf r_n\bm\sigma_n}(t)$. A tree or, algebraically, the $n$-tuple $\mathbf r_n$ prescribes the sequences of creations (i.e., collisions) of the $n$ particles. 

The Boltzmann--Enskog forward flow is defined as follows. A particle $i$ moves freely up to a contact with some other particle $j$. If the particle~$j$ is not the next collision partner of the particle~$i$ or the particle~$i$ is not the next collision partner of the particle~$k$, then they ignore each other, i.e., they go through each other reaching a mutual distance $\leq a$. Otherwise, if the two particles are the next collision partners as specified by $\mathbf r_n$, then the particle with the larger number disappears from $\bm\zeta^F$. For definiteness, let $i>k$, hence $k=r_i$. If $\sigma_{i-1}=-$, then the particle~$k$ does not change its velocity. If $\sigma_{i-1}=+$, then the particle~$k$ changes its velocity according to law (\ref{eq:coll}) (with $v$ and $v_1$ substituted by the precollisional velocities of the particles~$k$ and $i$ respectively, and $\omega$ being the unit vector directed from the center of the hard sphere~$i$ to the center of the hard sphere~$k$). If, according to the tree, the particles~$i$ and~$k$ are the next collision partners, then the existence of a moment of their contact is guaranteed by the condition $\bm\zeta(0)\in \tilde A_{\mathbf r_n\bm\sigma_n}(t)$. 

By construction, if we apply the backward flow (\ref{eq:bflow}) to $(x_1,v_1,\mathbf t_n,\bm\omega_n,\mathbf v_n)$ with $s=0$ to obtain $\bm\zeta(0)$ and then apply the forward flow to $\bm\zeta(0)$, we will obtain $\bm\zeta^F(t)=\zeta^F_1(t)=(x_1,v_1)$.

\begin{remark}
It may be worth to underline that forward and backward flows are not directly connected with the real dynamics of a hard-sphere system. In these flows particles can overlap and the tree $\mathbf r_n$ specifies which pair of particles must collide once at contact and which particles overlap freely.
This formalism is just a way to represent the solution of a partial differential equation. A connection with the hard-sphere motion will be discussed later on.
\end{remark}

\subsection{Partially ordered trees}

Preliminary to the construction of measure valued solutions discussed in the next section, we introduce here a rearrangement of the previous series expansion, based on a different notion of tree which will be called `partially ordered tree' in contrast with the fully ordered tree -or simply `tree'- defined above. Here we disregard the mutual ordering of collision times, except for those particles having the same parent. Equivalently, we ignore the time ordering of the birthdays of descendants of different parents.

For instance consider the tree $ \mathbf r_5=\{1,1,2,3,2\}$ in Fig.\,\ref{fig:trees}, and the other two trees in Fig.\,\ref{fig:trees'}, in which the time ordering $t_5 <t_4$ and $t_3 <t_2$ is ignored. All these trees are equivalent to the first one as unlabelled graphs. We call then `partially ordered tree' a tree in which only the ordering of branches generated by a given one is fixed.
In  Fig.\;3 a partially ordered tree is the collection of a fully ordered trees with different ordering of the times $t_2, t_3$ and $t_4, t_5$ and so on.

\begin{figure}[h]
\centering
\includegraphics[width=5cm]{fig_PST2.pdf}\label{fig:trees3}
\caption{}\label{fig:trees''}
\end{figure}

Each partially ordered tree is fully specified by the sequence of integers 
$$
\mathbf {k_n}= \{k_1, \cdots, k_n \}
$$
where $k_i$ is the number of collisions of the particle~$i$ on the way to its final point (after its creation). 
In particular, $k_1$ is the number of particles generated by the first particle.
The name assigned to each particle is specified as follows.

Defining
$$
K_i=\begin{cases}
1,&i=1,\\
1+\sum_{l=1}^{i-1}k_l, &i>1;
\end{cases}
$$
we fix the following ordering: particles created by particle $i$ obtain the numbers $K_i+1,\ldots,K_i+k_i$.We also set $k_{n+1}:=0$, 
since the particle with the last number does not suffer collisions.

We denote the set of partially ordered trees by $\mathcal K_n=\{\mathbf k_n=(k_1,\ldots,k_n)\}\subset \{0,\ldots,n\}^n$. We will use that it is given by $n$-tuples of integers such that
\begin{equation}
\sum_{i=1}^nk_i=n\;.
\end{equation}
The variables $k_i$ are related by the fact that the tree $\mathbf k_n$ must be realizable as graph. 
For instance the sequence $(1,0,2,1)$ with $n=4$ is not admissible, because it contains descendants without parents. 
First, we require  $k_1\geq1$ always. Then if $k_i >0$, particle $i$ must be already created in the procedure described above.

Note that we can represent the solution of the Boltzmann--Enskog equation in terms of partially ordered trees as 
\begin{equation}\label{eq:series}
f(t)=\sum_{n=0}^\infty\lambda^{n}f^{(n)}(t),
\end{equation}
\begin{subequations}\label{eq:fn}
\begin{eqnarray}
&f^{(0)}(t)=&\mathcal S_1(t)f_{0},\\
&f^{(n)}(t)=&\sum_{\mathbf k_n\in\mathcal K_n}\mathcal Q_{\mathbf k_n}(1;t)f_0^{\otimes(n+1)},\quad n\geq1,\label{eq:fnb}
\end{eqnarray}
\end{subequations}
where the operators $\mathcal Q$ are given by the following recursive formula:
\begin{eqnarray}
&&\mathcal Q_{\mathbf k_n}(i;s)=\nonumber\\&&
\left[\prod_{j=0}^{k_i-1}\int_0^{T_{ij}}dt_{K_i+j}
\mathcal S_i(T_{ij}-t_{K_i+j})\mathcal C_{i,K_i+j+1}
\mathcal Q_{\mathbf k_n}(K_i+j+1;t_{K_i+j})\right]\nonumber
\\&&
\times
\mathcal S_i(t_{K_i+k_i-1})\label{eq:q}
\end{eqnarray}
if $k_i\geq1$ and $\mathcal Q_{\mathbf k_n}(i;s)=\mathcal S_i(s)$ if $k_i=0$. 
Here
$$
T_{ij}=\begin{cases}
s,&j=0,\\
t_{K_i+j-1},&j>0.
\end{cases}
$$
\begin{figure}[h]
\centering
\includegraphics[width=6cm]{fig_PST4}
\caption{}
\end{figure}
Then
\begin{equation*}
\begin{split}
\mathcal Q_{\mathbf k_n}(1;t) &= 
 \int_0^t  \mathcal S_1(t-t_1)\mathcal C_{1,2} \mathcal Q_{\mathbf k_n}(2;t_1)
\int_0^{t_1}  \mathcal S_1(t_1-t_2)\mathcal C_{1,3} \mathcal Q_{\mathbf k_n}(3;t_2)\times  \cdots \\
 &\times\int_0^{t_{k_1-1} }  \mathcal S_1(t_{k_1-1}-t_{k_1})\mathcal C_{1,k_1} 
\mathcal Q_{\mathbf k_n}(k_1;t_{k_1-1} ) \, \mathcal S_1(t_{k_1} ) \;.
\end{split}
\end{equation*}
For each $ \mathcal Q_{\mathbf k_n}(\ell;t_{\ell-1} )$ we repeat the procedure to arrive to Eq.\,\eqref{eq:series}.

Observe that the notions of backward and forward Boltzmann--Enskog flow, introduced in the previous section, can be extended easily to the present context.
We define as before the $(\mathbf k_n,\bm\sigma_n)$-dependent map
\begin{equation}\label{eq:bflow1}
(x_1,v_1,\mathbf t_n,\bm\omega_n,\mathbf v_n)\longmapsto \bm\zeta(0),
\end{equation}
the only difference being that the times $\mathbf t_n=(t_1,\ldots,t_n)$ are only partially ordered. Moreover we denote by
$ A_{\mathbf k_n\bm\sigma_n}(t)$ the image of this Borel map. 

Notice that two fully ordered trees yield the same partially ordered tree if they 
are equivalent as topological (unlabelled) graphs. Therefore $ \mathbf k_n$ can be thought as an equivalence class of fully ordered trees and we write $\mathbf r_n \in \mathbf k_n$ if  $\mathbf r_n$ belongs to the equivalence class specified by
$\mathbf k_n$. We have
$$
\bigcup_{\mathbf r_n\in\mathbf k_n}
\widetilde A_{\mathbf r_n\bm\sigma_n}=
A_{\mathbf k_n\bm\sigma_n}.
$$
Moreover
$$
{\sum_{\mathbf r_n}}^*=
\sum_{\mathbf k_n\in\mathcal K_n}
\sum_{\mathbf r_n\in\mathbf k_n}.
$$

The introduction of the backward Boltzmann--Enskog flow allows to write the series solution in a more explicit way, namely 
\begin{multline}\label{eq:expl}
f(x_1,v_1,t)\\=
\sum_{n=0}^\infty\lambda^n
{\sum_{\mathbf k_n}}
\sum_{\bm\sigma_n\in\{\pm \}^n}
\int \, d\Lambda \prod_{i=1}^n\sigma_i
[\omega_i \cdot (v_{i+1}-\eta_{r_i}(t_i+0))]
f_0^{\otimes(n+1)}(\bm\zeta(0)),
\end{multline}
where $r_i$ is the parent of particle $i$ in the partially ordered tree, $\zeta(0)=\zeta(0,\mathbf k_n,\bm\sigma_n)$,
\begin{equation*}
d\Lambda(\mathbf t_n,\bm\omega_n,\mathbf v_n)=\chi_{\mathbf k_n}(\mathbf t_n)
dt_1\ldots dt_nd\omega_1\ldots d\omega_ndv_2\ldots dv_{n+1},
\end{equation*}
$\chi_{\mathbf k_n}(\mathbf t_n)$ is the indicator function of the partial order dictated by the tree
$\mathbf k_n$ and $\bm\zeta(s,\mathbf k_n,\bm\sigma_n)$, $ s \in (0,t)$ is the backward flow.
By using the change of variables induced by the map \eqref{eq:bflow1}, we arrive to
\begin{multline}\label{eq:expl1}
\int dz_1f(x_1,v_1,t)\phi(z_1)\\=
\sum_{n=0}^\infty\left(\frac\lambda{a^2}\right)^n
{\sum_{\mathbf k_n}}
\sum_{\bm\sigma_n\in\{\pm \}^n}\prod_{i=1}^n\sigma_i
\int_{ A_{\mathbf k_n  \bm\sigma_n}(t)} d\bm z_{n+1}
f_0^{\otimes (n+1)}( \bm z_{n+1}) \phi(\zeta_1(t))\;,
\end{multline}
for any bounded continuous function $\phi: \mathbb R^6\to \mathbb R$.

A fundamental property of the series expansion is the semigroup property.  Denote
$\mathcal T(t)$ the evolution operator sending $f_0$ to $f(t)$, as given by the above formulas.
Then 
$f(t)=\mathcal T(t)f_0
$.
If $\tau_1,\tau_2\geq0$, then by algebraic manipulations it follows that
\begin{equation}\label{eq:semigroup}
\mathcal T(\tau_1+\tau_2)=\mathcal T(\tau_2)\mathcal T(\tau_1).
\end{equation}
In Section \ref{sec:MVS}, more attention will be paid to the proof of the semigroup property for measure valued solutions, which will be needed to construct the microscopic solutions of the Boltzmann--Enskog equation.

\subsection {Convergence}

The first relevant property of the expansion in terms of partially ordered trees is that their number is much smaller than the number of fully ordered trees. Indeed, while
$$
\sum_{\mathbf r_n} 1=n!,
$$
$$
\sum_{\mathbf k_n\in\mathcal K_n}1 \leq  \sum_{\begin{smallmatrix}k_1,\ldots,k_n\geq0,\\
k_1+\ldots+k_n=n\end{smallmatrix}}1 = \sum_{\begin{smallmatrix}k_1,\ldots,k_n\geq0,\\
k_1+\ldots+k_n=n\end{smallmatrix}} 2^n \prod_{i=1}^n 2^{-k_i} <4^n\;.
$$
This allows to prove

\begin{proposition}\label{prop:converg}
If $\lambda<a^2/8\|f_0\|$ (where $\|\cdot\|$ is the $L^1$-norm), the series (\ref{eq:expl1}) is convergent (in $L^1(\mathbb R^6)$) for all $t>0$. 
\end{proposition}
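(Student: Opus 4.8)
The plan is to bound the $L^1$ norm of each term $f^{(n)}(t)$ of the series \eqref{eq:series} by a geometric quantity $\|f_0\|\,(8\lambda\|f_0\|/a^2)^n$, using the explicit representation \eqref{eq:expl} together with the change of variables induced by the backward Boltzmann--Enskog flow \eqref{eq:bflow1}. Completeness of $L^1(\mathbb R^6)$ then yields the stated convergence, and the same estimate applied to \eqref{eq:expl1} controls the pairing against an arbitrary bounded continuous $\phi$.

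In detail: starting from \eqref{eq:expl} I would integrate in $z_1$ and move the absolute value inside every sum and integral; since $f_0\ge0$ this gives
\[
\|f^{(n)}(t)\| \le \sum_{\mathbf k_n\in\mathcal K_n}\ \sum_{\bm\sigma_n\in\{\pm\}^n}\ \int dz_1 \int d\Lambda\ \prod_{i=1}^n |\omega_i\cdot(v_{i+1}-\eta_{r_i}(t_i+0))|\, f_0^{\otimes(n+1)}(\bm\zeta(0)).
\]
Then I would apply the change of variables \eqref{eq:bflow1}: by the Jacobian identity recalled just before \eqref{eq:bflow1} (the determinant has modulus $a^{2n}\prod_i|\omega_i\cdot(v_{i+1}-\eta_{r_i}(t_i+0))|$), the inner integral equals $a^{-2n}\int_{A_{\mathbf k_n\bm\sigma_n}(t)} d\bm z_{n+1}\, f_0^{\otimes(n+1)}(\bm z_{n+1})$, and since $A_{\mathbf k_n\bm\sigma_n}(t)\subseteq\mathbb R^{6(n+1)}$ and $f_0\ge0$ this is at most $a^{-2n}\|f_0\|^{n+1}$, a bound uniform in $t$. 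Finally I would use the counting estimates displayed just before the statement, $\#\mathcal K_n<4^n$ and $\#\{\pm\}^n=2^n$, to obtain $\|f^{(n)}(t)\|\le 8^n a^{-2n}\|f_0\|^{n+1}=\|f_0\|\,(8\|f_0\|/a^2)^n$, whence
\[
\sum_{n\ge0}\lambda^n\|f^{(n)}(t)\| \le \|f_0\|\sum_{n\ge0}\Bigl(\tfrac{8\lambda\|f_0\|}{a^2}\Bigr)^n < \infty \quad\text{when } \lambda<\tfrac{a^2}{8\|f_0\|}.
\]
The partial sums of \eqref{eq:series} then form a Cauchy sequence in $L^1(\mathbb R^6)$, so the series converges there.

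The only substantial point --- the main obstacle --- is the change of variables in the second step: one must know that the backward flow map \eqref{eq:bflow1} is, for almost every choice of node variables, injective onto $A_{\mathbf k_n\bm\sigma_n}(t)$ with exactly the stated Jacobian, so that the push-forward of $f_0^{\otimes(n+1)}$ can legitimately be compared with its integral over the whole of $\mathbb R^{6(n+1)}$; were the map many-to-one the inequality would point the wrong way. This is precisely the measure-equivalence statement quoted in the excerpt, the essential injectivity being guaranteed by the fact that the forward flow inverts the backward flow. Once it is available, the remainder is the geometric bookkeeping made possible by the fact that partially ordered trees number at most $4^n$ rather than $n!$, which is also why no smallness restriction on $t$ is needed.
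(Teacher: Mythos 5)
Your proof is correct and follows essentially the same route as the paper: bound each term by $\|f_0\|^{n+1}$ after the change of variables to $A_{\mathbf k_n\bm\sigma_n}(t)$ (which the paper has already performed in arriving at \eqref{eq:expl1}), then use the counts $\#\mathcal K_n<4^n$ and $\#\{\pm\}^n=2^n$ to get the geometric factor $(8\lambda\|f_0\|/a^2)^n$. The only difference is that you spell out the Jacobian/injectivity step explicitly, which the paper treats as already established in Section \ref{sec:series}.
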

\begin{proof}
We have
\begin{equation*}
\|f(t)\| \leq\mathcal  \|f_0\|+
\sum_{n=1}^\infty\left(\frac{\lambda}{a^2}\right)^n
\sum_{\mathbf k_n\in\mathcal K_n}
\sum_{\bm\sigma_n\in\{\pm\}^n}
\int_{A_{\mathbf k_n\bm\sigma_n}(t)}
f_0^{\otimes(n+1)}(\bm\zeta_{n+1})\,d\bm\zeta_{n+1}\;.
\end{equation*}
The generic term is smaller than
\begin{equation*}
\begin{split}
&\left(\frac{\lambda}{a^2}\right)^n\|f_0\|^{n+1}\sum_{\mathbf k_n\in\mathcal K_n}
\sum_{\bm\sigma_n\in\{\pm \}^n}1=
\left(\frac{2\lambda}{a^2}\right)^n\|f_0\|^{n+1}\sum_{\mathbf k_n\in\mathcal K_n}1
\\
\leq&\left(\frac{2\lambda}{a^2}\right)^n\|f_0\|^{n+1}
\sum_{\begin{smallmatrix}k_1,\ldots,k_n\geq0,\\
k_1+\ldots+k_n=n\end{smallmatrix}}1<
\left(\frac{8\lambda}{a^2}\right)^n\|f_0\|^{n+1}.
\end{split}
\end{equation*}
We see that the series geometrically converges whenever $\lambda<a^2/8\|f_0\|$.
\end{proof}

This slightly improves the result of \cite{Pulvir} due to another algebraic way of description of the same graphical tree structure: we specify a tree just by an $n$-tuple $\mathbf k_n$. 

Moreover it may be interesting to observe that one can prove the convergence for any $\lambda$, but small $t$ . However, this is not a real restriction, since the free energy functional  (\ref{eq:hfunc}) provides a uniform (in time) 
estimate of absolute continuity of the distribution function which allows to iterate the procedure to reach arbitrary times, see \cite{Pulvir}.

\section {Measure valued solutions} \label{sec:MVS}

To prove the existence of microscopic solutions to the Enskog equation, we need to extend the series solution discussed so far to the case of initial singular measures.
Proceeding formally we write, for any bounded continuous
$\varphi(x_1,v_1)$  on $\mathbb R^6$,  
\begin{multline}\label{eq:weak}
\int_{\mathbb R^6}\varphi(x_1,v_1)\mu_t(dx_1dv_1)\\=
\sum_{n=0}^\infty\left(\frac\lambda{a^2}\right)^n
\sum_{\mathbf k_n\in\mathcal K_n}
\sum_{\bm\sigma_n\in\{\pm \}^n}\prod_{i=1}^n\sigma_i
\int_{A_{\mathbf k_n\bm\sigma_n}(t)} \varphi\left(z_1^F(t)\right)
\mu_0^{n+1}(d\mathbf z_{n+1})
\end{multline}
where  $\mathbf z^F(t,\mathbf z_{n+1}, \mathbf k_n,\bm\sigma_n))$ is the forward Boltzmann--Enskog flow,  $\mu_0(dx_1dv_1)$ on $\mathbb R^6$ is any finite Borel measure and
$\mu_0^{n}(d\mathbf z_n):=\mu_0(dz_1)\ldots\mu_0( dz_n).$ 
We rewrite here the integration variables $\bm\zeta(0)$ as $\mathbf z_{n+1}$, and $\zeta_1^F(t)$ as $z_1^F(t)=z_1^F(t,\mathbf z_{n+1},\mathbf k_n,\bm\sigma_n)$.

\subsection{Pathologies}

It may arise an ambiguity in this definition  if  the $\mu_0^{n+1}$-measure of the boundary $\partial A_{\mathbf k_n\bm\sigma_n}(t)$ of the region $A_{\mathbf k_n\bm\sigma_n}(t)$ is not vanishing. Recall that $A_{\mathbf k_n\bm\sigma_n}(t)$ is the set of the initial $n+1$-particle configurations leading to the tree $(\mathbf k_n,\bm\sigma_n)$. So, $\partial A_{\mathbf k_n\bm\sigma_n}(t)$ is the set of the initial $n+1$-particle configurations that, depending on small perturbation, may realize or not  the tree $(\mathbf k_n,\bm\sigma_n)$. A configuration $\mathbf z_{n+1}$ belongs to 
$\partial A_{\mathbf k_n\bm\sigma_n}(t)$ in several cases. 

The first one corresponds to an initial configuration leading to a collision of the particles~1 and~2 exactly at the final instant $t$. 

If a configuration leading to a collision at the instant $t$ has a positive $\mu_0$-measure, then we need to consider either $t-\eta$ or $t+\eta$ instead. For this reason, we may demand (\ref{eq:weak}) to be satisfied not for all $t$ but for almost all $t$.

The second case when 
$\mathbf z_{n+1}\in\partial A_{\mathbf k_n\bm\sigma_n}(t)$
corresponds to initial configurations leading, in the forward Boltzmann--Enskog flow, to so called grazing collisions (collisions with $\omega\cdot(v-v_1)=0$ in (\ref{eq:coll})). 
However, since we are interested in initial data of the form \eqref{eq:microsol} for $t=0$ also such pathology can be avoided by simply using that the initial configuration does not deliver grazing collisions in a finite time. In fact the set of such configurations is of full Lebesgue measure.

The previous pathologies depend on the hard-sphere dynamics, but there is also a pathology depending on the structure of Eq.\,\eqref {eq:weak}. We illustrate it by means of an example. Consider the case $N=2$ with initial measure
$$
\mu_0(d\zeta_1 d\zeta_2) =\frac 12 ( \delta (\zeta_1 -z_1) +  \delta (\zeta_2 -z_2))d\zeta_1d\zeta_2
$$
and with initial state $(z_1,z_2)$ leading to a collision in the time $(0,t)$. If we compute the terms in the expansion relative to $n=0,1$, we easily recover the time evolved measure $\mu_t$. 
This means that particle $2$ is created in some definite instant $t_1$, to fit with the initial configuration. However, as pointed out in \cite{PulvirSimon}, there are other non vanishing contributions. For instance particle $3$ can be created by particle $2$ with $\sigma_2=-$ at time $t_1-0$. The initial contribution is
$$
\delta (\zeta_1 -z_1) \delta (\zeta_2 -z_2)\delta(\zeta_3 -z_1)=\delta (\zeta_1 -z_1) \delta (\zeta_2 -z_2)
\delta (\zeta_3 -\zeta_1)\;.
$$
Furthermore, an arbitrary number of branches with $\sigma_i  =-, i>1$ can be created accumulating at the first node, see Fig.\,\ref{fig:path}.
In order to prevent such an unphysical event, we introduce next a suitable notion of measure valued solution.

\begin{figure}[h]
\centering 
\includegraphics[width=\textwidth]{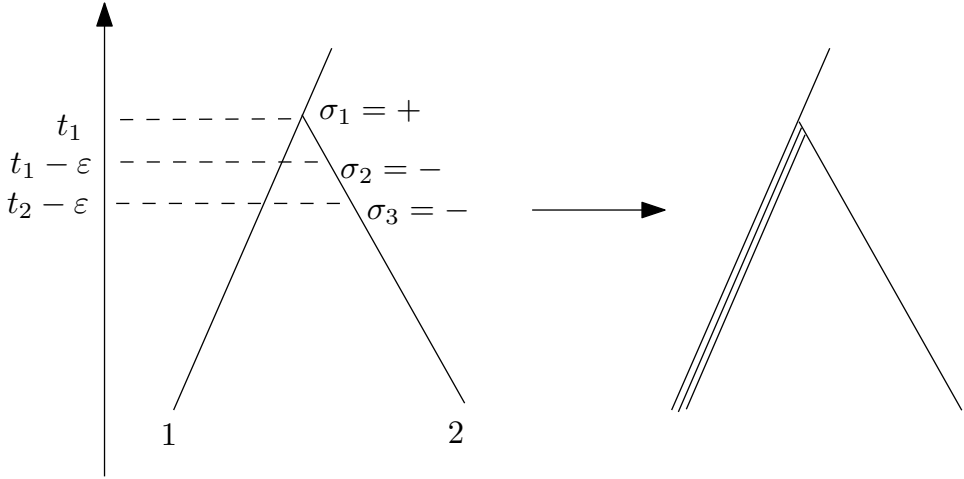}
\caption{: Initial configuration 
$\delta(\zeta_1-z_1)\delta(\zeta_2-z_2)\delta(\zeta_3-z_1)\delta(\zeta_4-z_1)\cdots$.
} \label{fig:path}
\end{figure}

\subsection{Regularization via time separation}

In order to overcome the pathologies discussed above, we modify
definition \eqref{eq:weak} by setting
\begin{multline}\label{eq:weakr}
\int_{\mathbb R^6}\varphi(x_1,v_1)\mu_t(dx_1dv_1)\\=
\sum_{n=0}^\infty\left(\frac\lambda{a^2}\right)^n
\sum_{\mathbf k_n\in\mathcal K_n}
\sum_{\bm\sigma_n\in\{\pm \}^n}\prod_{i=1}^n\sigma_i
\lim_{\varepsilon \to 0^+}
\int_{A^\varepsilon_{\mathbf k_n\bm\sigma_n}(t)} \varphi\left(z_1^F(t)\right)
\mu_0^{n+1}(d\mathbf z_{n+1})\;.
\end{multline}
The difference between \eqref{eq:weak} and \eqref{eq:weakr} consists in the introduction of the parameter $\varepsilon$
and in the replacement of $A_{\mathbf k_n\bm\sigma_n}(t)$ by $A^\varepsilon_{\mathbf k_n\bm\sigma_n}(t)$ defined by
$$
A^\varepsilon_{\mathbf k_n\bm\sigma_n}(t)=A_{\mathbf k_n\bm\sigma_n}(t)\backslash S^\varepsilon_{\mathbf k_n\bm\sigma_n}(t)\;,
$$
where the set
$
S^\varepsilon_{\mathbf k_n\bm\sigma_n}(t)
$
consists of all the elements $\mathbf z_{n+1}$ delivering, in the Boltzmann--Enskog forward flow, the event in which two particles,
born from the same progenitor, are created at times $t_i, t_j $, such that  $|t_i-t_j | \leq \varepsilon$. In other words, creations of particles from the same branch are time separated by $\varepsilon >0$.
Clearly, this avoids the main pathology envisaged in the previous section. 

We take Equation \eqref{eq:weakr} as definition of `regularized  series solution' for a given initial measure $\mu_0$. We will apply this notion only to initial measures which are empirical distributions, namely of the form  \eqref{eq:microsol} at time $0$. Notice, however, that the regularized series solution makes sense for any initial $\mu_0$.

\begin{remark} One could have defined the time separation on all the nodes of a fully ordered tree. This would not be good for our purpose, since simultaneous creations from different branches of a tree play a crucial role in the reconsruction of the hard-sphere dynamics (see Section \ref{subsec:RvC} below).
\end{remark}

A crucial property of the weak series solution is the semigroup property which we are going to illustrate.

\subsection{Semigroup property}

\begin{proposition}[Semigroup property]\label{prop:semigroup}
Let $\mu_0$ be such that configurations leading to grazing collisions have zero $\mu_0^{n+1}$-measure for all $n$, $\mathbf k_n$, and $\bm\sigma_n$. Let also $t>0$ and $\tau\in(0,t)$ be such that initial configurations leading to collisions at the instants $\tau$ or $t$ have zero $\mu_0^{n+1}$-measure for all $n$, $\mathbf k_n$, and $\bm\sigma_n$. If the following equalities are satisfied:

\begin{multline}\label{eq:tau}
\int_{\mathbb R^6}\varphi(x_1,v_1)\mu_\tau(dx_1dv_1)\\=
\sum_{n=0}^\infty\left(\frac\lambda{a^2}\right)^n
\sum_{\mathbf k_n\in\mathcal K_n}
\sum_{\bm\sigma_n\in\{\pm \}^n}\prod_{i=1}^n\sigma_i
\lim_{\varepsilon \to 0^+}
\int_{A^\varepsilon_{\mathbf k_n\bm\sigma_n}(\tau)} \varphi\left(z_1^F(\tau)\right)
\mu_0^{n+1}(d\mathbf z_{n+1})\;,
\end{multline}
\begin{multline}\label{eq:ttau}
\int_{\mathbb R^6}\varphi(x_1,v_1)\mu_t(dx_1dv_1)\\=
\sum_{n=0}^\infty\left(\frac\lambda{a^2}\right)^n
\sum_{\mathbf k_n\in\mathcal K_n}
\sum_{\bm\sigma_n\in\{\pm \}^n}\prod_{i=1}^n\sigma_i 
\lim_{\varepsilon \to 0^+}
\int_{A^\varepsilon_{\mathbf k_n\bm\sigma_n}(t-\tau)} \varphi\left(z_1^F(t-\tau)\right)
\mu_{\tau}^{n+1}(d\mathbf z_{n+1})\;,
\end{multline}
then equality (\ref{eq:weakr}) is also satisfied. Conversely, if (\ref{eq:weakr}) is satisfied for  $t$ and $\tau$, then  (\ref{eq:tau}) is also satisfied.
\end{proposition}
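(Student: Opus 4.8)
The plan is to prove the semigroup property by reorganizing the tree expansion at time $t$ according to the "splitting instant" $\tau$, and matching the two halves with the expansions \eqref{eq:tau} and \eqref{eq:ttau}. First I would note that, because of the time-separation regularization and the assumption that no collision occurs exactly at time $\tau$ (with positive $\mu_0^{n+1}$-measure), for every partially ordered tree $\mathbf k_n$ and every sign assignment $\bm\sigma_n$ contributing to \eqref{eq:weakr}, all node times $t_i$ can be unambiguously classified as either $t_i>\tau$ or $t_i<\tau$. This induces a canonical cut of each fully ordered tree $\mathbf r_n\in\mathbf k_n$ into an "upper part" (nodes with $t_i\in(\tau,t)$) describing the evolution on $[\tau,t]$, and a collection of "lower parts" (nodes with $t_i\in(0,\tau)$) hanging from the particles alive at time $\tau$. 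Each particle present in $\bm\zeta(\tau)$ becomes the root of its own subtree in the lower part. The key combinatorial point is that the sum over $(\mathbf k_n,\bm\sigma_n,\mathbf t_n)$ on $[0,t]$ factorizes into: a sum over the upper tree on $[\tau,t]$ with some number $m$ of surviving particles, times, for each of those $m$ particles, an independent sum over its own tree on $[0,\tau]$.

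The next step is to make this factorization precise at the level of the sets $A^\varepsilon_{\mathbf k_n\bm\sigma_n}(t)$ and the forward flow. By definition of the Boltzmann--Enskog backward/forward flow, the flow on $[0,t]$ restricted to $[\tau,t]$ is exactly the backward flow associated with the upper tree started from $\bm\zeta(\tau)$, while the flow on $[0,\tau]$ restricted to the subtree rooted at the $\ell$-th survivor is the backward flow for that subtree started from the corresponding point of $\bm\zeta(\tau)$. Correspondingly the integration region $A^\varepsilon_{\mathbf k_n\bm\sigma_n}(t)$ is (up to the boundary pieces excluded by the hypotheses) the preimage, under the forward flow on $[\tau,t]$, of a product $\prod_{\ell=1}^m A^\varepsilon_{\mathbf k^{(\ell)}\bm\sigma^{(\ell)}}(\tau)$ of lower-tree regions, restricted further to the upper-tree region $A^\varepsilon_{\mathbf k^{\mathrm{up}}\bm\sigma^{\mathrm{up}}}(t-\tau)$ evaluated with initial data at time $\tau$. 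The sign prefactor $\prod_i\sigma_i$ and the weight $(\lambda/a^2)^n$ split multiplicatively across upper and lower parts since $n=n_{\mathrm{up}}+\sum_\ell n_\ell$. Plugging the expansion \eqref{eq:tau} for $\mu_\tau$ into \eqref{eq:ttau} and using $\mu_\tau^{m}(d\mathbf z_m)=\prod_{\ell=1}^m\mu_\tau(dz_\ell)$, I would expand each factor $\mu_\tau$ by its own series, obtaining precisely the reorganized sum above; reading it backwards yields \eqref{eq:weakr}. The converse direction (from \eqref{eq:weakr} and $t$, $\tau$ to \eqref{eq:tau}) follows by the same identity, using that the pieces of the cut are in bijection with triples of trees, so one can isolate the $\tau$-expansion.

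A number of technical points must be handled along the way. One must check that the regularization parameter $\varepsilon$ can be carried through the factorization: time separation by $\varepsilon$ between siblings is a condition internal to each branch, hence it respects the cut at $\tau$ as long as no sibling pair straddles $\tau$ at distance $\le\varepsilon$ — but that possibility concerns a configuration with a node time within $\varepsilon$ of $\tau$, and letting $\varepsilon\to0^+$ after taking the limit in \eqref{eq:tau},\eqref{eq:ttau} (and using the no-collision-at-$\tau$ hypothesis to control the boundary) removes it; so one should take the $\varepsilon$-limits in a compatible order, or introduce two parameters $\varepsilon_{\mathrm{up}},\varepsilon_{\mathrm{low}}$ and let them go to zero separately. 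One must also verify that the grazing-collision hypothesis guarantees $\mu_\tau$ (as produced by \eqref{eq:tau}) again assigns zero measure to the relevant boundaries, so that \eqref{eq:ttau} is itself well posed; this uses that the forward flow is measure-preserving (Jacobian computed in Section \ref{sec:flows}) up to the explicit weights, hence pushes forward absolutely-continuous-on-the-boundary statements. Finally, absolute convergence of all the rearranged series — which legitimates the interchange of summations and the insertion of one series into another — follows from the geometric bound in Proposition \ref{prop:converg} applied on $[0,\tau]$ and on $[\tau,t]$ separately, provided $\lambda<a^2/8\|\mu_0\|$ in total variation.

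\textbf{Main obstacle.} The delicate part is the bookkeeping of the set identity
$A^\varepsilon_{\mathbf k_n\bm\sigma_n}(t)\simeq (\text{forward flow on }[\tau,t])^{-1}\!\Big(A^{\varepsilon}_{\mathbf k^{\mathrm{up}}\bm\sigma^{\mathrm{up}}}(t-\tau)\cap\prod_\ell A^{\varepsilon}_{\mathbf k^{(\ell)}\bm\sigma^{(\ell)}}(\tau)\Big)$
together with the matching of the forward flows across $\tau$: one must argue that a particle which, in the $[0,t]$ forward flow, is "overlapping and ignored" remains so after the cut, and that survivors at time $\tau$ carry exactly the precollisional/postcollisional velocity data prescribed by their $\sigma$'s in both descriptions. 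This is where the partially-ordered-tree formalism pays off (siblings keep their relative order, so subtrees cut cleanly), but it requires a careful induction on $n$ and an honest check that the excluded boundary sets (collisions at $\tau$ or $t$, grazing collisions, $\varepsilon$-coincidences) have $\mu_0^{n+1}$-measure zero under the stated hypotheses, so that all the "$\simeq$" become genuine equalities after integration.
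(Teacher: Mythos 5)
Your proposal is correct and follows essentially the same route as the paper: substituting the $[0,\tau]$ expansion \eqref{eq:tau} into \eqref{eq:ttau}, composing the ``top'' tree on $[\tau,t]$ with the ``bottom'' trees on $[0,\tau]$ into a joint tree, matching the forward flows across $\tau$, using the no-collision-at-$\tau$ hypothesis to reconcile the $\varepsilon$-separated regions near the cut, and resumming over joint trees. The technical caveats you flag (compatible removal of the regularization in the different subtrees, the measure-zero boundary sets) are exactly the points the paper addresses in its comparison of the sets \eqref{eq:setpreB} and \eqref{eq:setB}.
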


\begin{proof}

Rewrite the integral in  (\ref{eq:ttau}) as
\begin{multline}\label{eq:ttau2}
\int_{A^\varepsilon_{\mathbf k_n\bm\sigma_n}(t-\tau)}
\varphi\left(z_1^F(t-\tau,\mathbf z_{n+1},\mathbf k_n,\bm\sigma_n)\right)
\mu_\tau^{n+1}(d\mathbf z_{n+1})\\=
\int_{\mathbb R^6}\mu_\tau(dz_1)
\ldots
\int_{\mathbb R^6}\mu_\tau(dz_{n+1})
\psi(t-\tau,z_1,\ldots, z_{n+1},\mathbf k_n,\bm\sigma_n),
\end{multline}
where
$$
\psi(t-\tau,\mathbf z_{n+1},\mathbf k_n,\bm\sigma_n)=
\varphi\left(z_1^F(t-\tau,\mathbf z_{n+1},\mathbf k_n,\bm\sigma_n)\right) \chi_{A^\varepsilon_{\mathbf k_n\bm\sigma_n}(t-\tau)}(\mathbf z_{n+1}),
$$
and $\chi_{\mathcal A}$ is a characteristic function of a set $\mathcal A$. 

Firstly, we give a heuristic proof of the proposition, then we give a formal proof. We apply formula (\ref{eq:tau}) $n+1$ times. Each $n$-tuple $\mathbf k_n$ in expansion (\ref{eq:ttau}) corresponds to one tree for the interval $[\tau,t)$. Applications of (\ref{eq:tau}) to (\ref{eq:ttau2}) give $n+1$ trees $\mathbf k^1_{m_1},\ldots,\mathbf k^{n+1}_{m_{n+1}}$, which are, graphically, continuations of the final points of the tree $\mathbf k_n$ (see Fig.~\ref{fig:sumtrees}). So, the composition of the tree $\mathbf k_n$ with the trees $\mathbf k^1_{m_1},\ldots,\mathbf k^{n+1}_{m_{n+1}}$ yields a new tree $\mathbf K_N$, where $N=n+m_1+\ldots+m_{n+1}$. The Boltzmann--Enskog backward flows produced by the trees $\mathbf k^1_{m_1},\ldots,\mathbf k^{n+1}_{m_{n+1}}$ for the time interval $[0,\tau]$ together constitute a continuation of the  Boltzmann--Enskog backward flow produced by the tree $\mathbf k_n$. Their composition yields the Boltzmann--Enskog backward flow produced by the tree $\mathbf K_N$. Conversely, the Boltzmann--Enskog forward flow produced by the tree $\mathbf k_n$ is a continuation of the $n$-tuple of the Boltzmann--Enskog forward flows produced by the trees $\mathbf k^1_{m_1},\ldots,\mathbf k^{n+1}_{m_{n+1}}$. Their composition also yields the Boltzmann--Enskog forward flow produced by the tree $\mathbf K_N$. 

From the graphical representation, one can understand that the summations over all $\mathbf k_n$ and over all $\mathbf k^1_{m_1},\ldots,\mathbf k^{n+1}_{m_{n+1}}$ gives the summation over all joint trees $\mathbf K_N$ for the whole interval $[0,t]$. 

\begin{figure}[h]
\centering
\includegraphics[width=.75\textwidth]{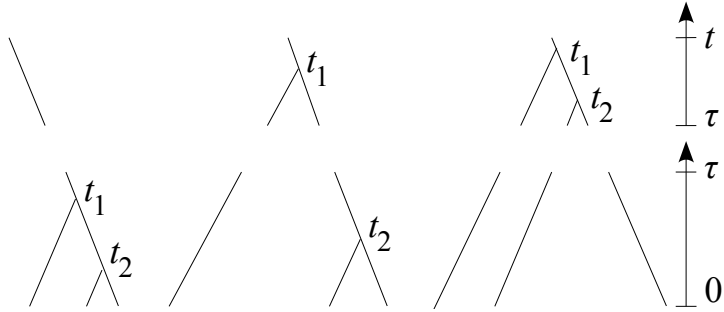}
\caption{Sum of compositions of trees.} 
\label{fig:sumtrees}
\end{figure}

Now we give a formal proof of the proposition. After the application of (\ref{eq:tau}) to (\ref{eq:ttau2}), a generic term is

\begin{multline}\label{eq:ttau3}
\left(\frac\lambda{a^2}\right)^N
\prod_{i=1}^{N}\Sigma_i
\lim_{\varepsilon\to0}
\int_{A^\varepsilon_{\mathbf k^1_{m_1}\bm\sigma^1_{m_1}}(\tau)\times\ldots\times
A^\varepsilon_{\mathbf k^{n+1}_{m_{n+1}}\bm\sigma^{n+1}_{m_{n+1}}}(\tau)}\\
\psi\left(t-\tau,\mathbf z_{n+1}^F(\tau,\mathbf z_{N+1},\mathbf K_N,\bm\Sigma_N),
\mathbf k_n,\bm\sigma_n\right)
\mu_0^{N+1}(d\mathbf z_{N+1}),
\end{multline}
where $\Sigma_i = \pm$ and
\begin{multline}\label{eq:psicompos}
\psi\left(t-\tau,\mathbf z_{n+1}^F(\tau,\mathbf z_{N+1},\mathbf K_N,\bm\Sigma_N),
\mathbf k_n,\bm\sigma_n\right)\\=
\varphi\left(z_1^F(t-\tau,\mathbf z_{n+1}^F(\tau,\mathbf z_{N+1},\mathbf K_N,\bm\Sigma_N),\mathbf k_n,\bm\sigma_n)\right)\\\times
 \chi_{A^\varepsilon_{\mathbf k_n\bm\sigma_n}(t-\tau)}
 \left(\mathbf z_{n+1}^F(\tau,\mathbf z_{N+1},\mathbf K_N,\bm\Sigma_N)\right)\;.
\end{multline}

Note that, in \eqref{eq:ttau3}, we have replaced the product of limits $\lim_{\varepsilon_q \to 0}$, $q=1,\cdots,n+1$ (running over the final points of the tree $\mathbf k_n$), with a single limit $\lim_{\varepsilon \to 0}$. The result is indeed independent on the precise way in which the regularization is removed in the different subtrees.

In the first factor in the right-hand side of (\ref{eq:psicompos}), we see a composition of two Boltzmann--Enskog forward flows: for the interval $[0,\tau]$ (determined by the ``bottom trees'' $\mathbf k_{m_1}^1,\ldots,\mathbf k_{m_{n+1}}^{n+1}$) and for the interval $[\tau,t]$ (determined by the ``top tree'' $\mathbf k_n$). We have 
\begin{equation*}
z_1^F(t-\tau,\mathbf z_{n+1}^F(\tau,\mathbf z_{N+1},\mathbf K_N,\bm\Sigma_N),\mathbf k_n,\bm\sigma_n)=z_1^F(t,\mathbf z_{N+1},\mathbf K_N,\bm\Sigma_N).
\end{equation*}

The characteristic function in (\ref{eq:psicompos})  means that, actually, the integration in (\ref{eq:ttau3}) takes place over initial configurations in $$A^\varepsilon_{\mathbf k^1_{m_1}\bm\sigma^1_{m_1}}(\tau)\times\ldots\times
A^\varepsilon_{\mathbf k^{n+1}_{m_{n+1}}\bm\sigma^{n+1}_{m_{n+1}}}(\tau)$$ leading to configurations in $A^\varepsilon_{\mathbf k_n\bm\sigma_n}(t-\tau)$ at the instant $\tau$, i.e., over the set
\begin{multline}\label{eq:setpreB}
\Big\{\mathbf z_{N+1}\in A^\varepsilon_{\mathbf k^1_{m_1}\bm\sigma^1_{m_1}}(\tau)\times\ldots\times
A^\varepsilon_{\mathbf k^{n+1}_{m_{n+1}}\bm\sigma^{n+1}_{m_{n+1}}}(\tau) \,\Big|\,\\
\mathbf z_{n+1}^F(\tau,\mathbf z_{N+1},\mathbf K_N,\bm\Sigma_N)\in 
A^\varepsilon_{\mathbf k_n\bm\sigma_n}(t-\tau)\Big\}.
\end{multline}
Consider another set:
\begin{equation}\label{eq:setB}
\{\mathbf z_{N+1}\in A^\varepsilon_{\mathbf K_N\bm\Sigma_N}(t) \,|\,
\mathbf z_{n+1}^F(\tau,\mathbf z_{N+1},\mathbf K_N,\bm\Sigma_N)\in 
A^\varepsilon_{\mathbf k_n\bm\sigma_n}(t-\tau)\}=
B^{\mathbf k_n\bm\sigma_nt-\tau;\varepsilon}_{\mathbf K_N\bm\Sigma_Nt}.
\end{equation}
Since, for a fixed $(\mathbf k_n,\bm\sigma_n)$, the ``bottom trees'' 
$$(\mathbf k_{m_1}^1\bm\sigma^1_{m_1},\ldots,\mathbf k^{n+1}_{m_{n+1}}\bm\sigma^{n+1}_{m_{n+1}})$$ and the joint tree $(\mathbf K_N,\bm\Sigma_N)$ are in a one-to-one correspondence, sets (\ref{eq:setpreB}) and (\ref{eq:setB}) coincide if $\varepsilon=0$. 

If $\varepsilon>0$, then these sets slightly differ. Namely, set (\ref{eq:setpreB}) does not demand the time separation between subsequent collisions of the same particle in a neighbourhood of the instant $\tau$. Consider, for example, the second composition of trees on Fig.~\ref{fig:sumtrees}, where $t_1\in[\tau,t]$ and $t_2\in[0,\tau]$. There is no demand of time separation between $t_1$ and $t_2$ in this composition: both $t_1$ and $t_2$ are allowed to be arbitrary close to $\tau$. However, by assumption, the $\mu_0^{N+1}$-measure of initial configurations leading to collisions in a $\varepsilon$-neighbourhood of $\tau$ tends to zero  for all triples $(N,\mathbf K_N,\bm\Sigma_N)$. Hence, in the limit $\varepsilon\to0$, the integrals over set (\ref{eq:setpreB}) and over set (\ref{eq:setB}) with respect to the measure $\mu_0^{N+1}$ coincide. 

Thus, the right-hand side of (\ref{eq:ttau}) can be written as
\begin{multline}\label{eq:ttau4}
\sum_{n=0}^\infty\sum_{m_1=0}^\infty\ldots\sum_{m_{n+1}=0}^\infty
\left(\frac\lambda{a^2}\right)^N
\sum_{\mathbf k_n\in\mathcal K_n}\sum_{\bm\sigma_n\in\{\pm \}^n}\\
\sum_{\mathbf k^1_{m_1}\in\mathcal K_{m_1}}
\sum_{\bm\sigma^1_{m_1}\in\{\pm \}^{m_1}}
\ldots
\sum_{\mathbf k^{n+1}_{m_{n+1}}\in\mathcal K_{m_{n+1}}}
\sum_{\bm\sigma^{n+1}_{m_{n+1}}\in\{\pm \}^{m_{n+1}}}
\prod_{i=1}^N\Sigma_i\lim_{\varepsilon\to0}\\
\int_{B^{\mathbf k_n\bm\sigma_nt-\tau;\varepsilon}_{\mathbf K_N\bm\Sigma_Nt}}
\varphi\left(z_1^F(\tau,\mathbf z_{N+1},\mathbf K_N,\bm\Sigma_N)\right)
\mu_0^{N+1}(d\mathbf z_{N+1}).
\end{multline}
The summation over 
$(m_1,\mathbf k^1_{m_1},\bm\sigma^1_{m_1},\ldots,
m_{n+1},\mathbf k^{n+1}_{m_{n+1}},\bm\sigma^{n+1}_{m_{n+1}})$ can be replaced by the summation over $(N,\mathbf K_N,\bm\Sigma_N)$, because, as we said before, they are in one-to-one correspondence for fixed $(n,\mathbf k_n,\bm\sigma_n)$. Hence, (\ref{eq:ttau4}) can be rewritten as
\begin{multline}\label{eq:ttau5}
\sum_{N=0}^\infty\left(\frac\lambda{a^2}\right)^N
\sum_{\mathbf K_N\in\mathcal K_N}\sum_{\bm\Sigma_N\in\{\pm \}^N}
\sum_{n=0}^\infty
\sum_{\mathbf k_n\in\mathcal K_n}\sum_{\bm\sigma_n\in\{\pm \}^n}
\prod_{i=1}^N\Sigma_i\\\times
\lim_{\varepsilon\to0}
\int_{B^{\mathbf k_n\bm\sigma_nt-\tau;\varepsilon}_{\mathbf K_N\bm\Sigma_Nt}}
\varphi\left(z_1^F(\tau,\mathbf z_{N+1},\mathbf K_N,\bm\Sigma_N)\right)
\mu_0^{N+1}(d\mathbf z_{N+1}).
\end{multline}
For every triple $(N,\mathbf K_N,\bm\Sigma_N)$,
\begin{equation*}
\bigcup_{n=0}^\infty
\bigcup_{\mathbf k_n\in\mathcal K_n}
\bigcup_{\bm\sigma_n\in\{\pm \}^n}
B^{\mathbf k_n\bm\sigma_nt-\tau;\varepsilon}_{\mathbf K_N\bm\Sigma_Nt}=
A^\varepsilon_{\mathbf K_N\bm\Sigma_N}(t),
\end{equation*}
hence the summation over $(n,\mathbf k_n,\bm\sigma_n)$ in (\ref{eq:ttau5}) gives the integral over $A^\varepsilon_{\mathbf K_N\bm\Sigma_N}(t)$:
\begin{multline*}
\int_{\mathbb R^6}\varphi(x_1,v_1)\mu_t(dx_1dv_1)=
\sum_{N=0}^\infty\left(\frac\lambda{a^2}\right)^N
\sum_{\mathbf K_N\in\mathcal K_N}
\sum_{\bm\Sigma_N\in\{\pm \}^N}\prod_{i=1}^N\Sigma_i\\ \times\lim_{\varepsilon\to0}
\int_{A^\varepsilon_{\mathbf K_N\bm\Sigma_N}(t)}
\varphi\left(z_1^F(t,\mathbf z_{N+1},\mathbf K_N,\bm\Sigma_N)\right)
\mu_0^{N+1}(d\mathbf z_{N+1}),
\end{multline*}
which coincides with (\ref{eq:weakr}).
\end{proof}

\section{Microscopic solutions of the Boltzmann--Enskog equation} \label{sec:MsBEE}

\subsection {Main result}

We are going to prove that a measure of form (\ref{eq:microsol}) is a weak series solution of the Boltzmann--Enskog equation, i.e., satisfies (\ref{eq:weakr}). We assume that the initial configuration $\mathbf z^0=(z^0_1,\ldots,z^0_N)$ is such that  configurations $\left(z^0_i\right)_{i\in\mathcal I_n}$, for all non-empty
$\mathcal I_n\subset\{1,\ldots,N\}$, lead to a well-defined hard-sphere motion and do not lead to grazing collisions. Also, for simplicity, let us assume that the initial configuration $\mathbf z^0$ does not lead to simultaneous collisions of different pairs of hard spheres. As previously recalled, the set of initial configurations satisfying these conditions has full measure.

\begin{theorem}\label{th}
Let $\lambda=Na^2$. Under the above assumptions on $\mathbf z^0$, the empirical measure (\ref{eq:microsol}) is a weak series solution of the Boltzmann--Enskog equation, i.e., it satisfies (\ref{eq:weakr}).
\end{theorem}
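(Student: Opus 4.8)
The plan is to verify Equation \eqref{eq:weakr} directly for $\mu_0 = \frac1N\sum_{i=1}^N \delta(\cdot - z_i^0)$, by expanding the product measure $\mu_0^{n+1}$, computing the resulting sums over tree structures, and recognizing that the surviving terms reproduce the empirical measure $\mu_t$ carried by the true hard-sphere flow $\mathrm T_N(t)\mathbf z^0$. First I would substitute $\mu_0^{n+1}(d\mathbf z_{n+1}) = N^{-(n+1)}\sum_{i_1,\dots,i_{n+1}} \prod_q \delta(dz_q - z^0_{i_q})$ into the right-hand side of \eqref{eq:weakr}. Together with the prefactor $(\lambda/a^2)^n = N^n$, each term carries a weight $N^n \cdot N^{-(n+1)} = N^{-1}$, which is exactly the normalization of the empirical measure; so the combinatorial task is to show that, after the $\varepsilon\to0$ limit, the sum over $n$, over partially ordered trees $\mathbf k_n$, over signs $\bm\sigma_n$, and over the label assignments $(i_1,\dots,i_{n+1})$ collapses to $\sum_{i=1}^N \varphi(x_i(t))$, where $(x_i(t),v_i(t)) = z_i(t)$ is the $i$-th sphere under the genuine dynamics.

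The key step is a bijection between the admissible ``collision histories'' that survive the time-separation regularization and the actual collision sequence of the finite hard-sphere system. Concretely: fix the root label $i_1 = i$; the backward flow starting from $z^0_i$ at time $t$ must, node by node, reconstruct pieces of the true trajectory $z_i(t)$. A node with sign $\sigma = +$ at time $t_\ell$ forces a genuine precollisional contact with another sphere — this pins down $\omega_\ell$, the partner velocity $v_{\ell+1}$, and the partner's label $i_{\ell+1}$ (it must be the true collision partner) — and the $\delta$ on $z_{\ell+1}$ against $z^0_{i_{\ell+1}}$ then forces the created particle to coincide with that sphere's true initial datum. A node with $\sigma = -$ attaches an overlapping ``phantom'' particle; for an empirical initial measure this phantom must also land on some $z^0_{i_{\ell+1}}$, and since the forward flow lets non-partners pass through each other, such a $-$ node must be paired with a cancelling $+$ node to contribute — this is where the alternating sign $\prod_i \sigma_i$ does its work, killing the spurious ``accumulation'' branches of Figure~\ref{fig:path} up to the measure-zero boundary already excised. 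After all cancellations, exactly one tree/sign/label configuration per root label $i$ survives, reproducing the real trajectory via the forward Boltzmann--Enskog flow, which on these configurations coincides with $\mathrm T_N(t)$.

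I would organize the argument inductively on the number of collisions the sphere $i$ actually undergoes in $[0,t]$, using the semigroup property (Proposition~\ref{prop:semigroup}) to peel off one true collision at a time: cut the time interval at a point $\tau$ lying strictly between two consecutive real collision times and chosen so that no collision occurs at $\tau$ (generic, hence $\mu_0^{n+1}$-null exceptional set), apply \eqref{eq:ttau}, and reduce to a strictly shorter collision history. The base case — no collisions in $[0,t]$ — is immediate since only the $n=0$ term $\mathcal S_1(t)$ contributes and $z_1^F(t) = (x_i^0 + v_i^0 t, v_i^0) = z_i(t)$. The hypotheses on $\mathbf z^0$ (well-defined dynamics for every subset, no grazing, no simultaneous collisions) guarantee exactly the non-degeneracy needed to run Proposition~\ref{prop:semigroup} and to ensure the bijection is clean at every stage.

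The main obstacle I anticipate is the bookkeeping of the $\sigma = -$ nodes: one must argue carefully that every configuration containing an ``unmatched'' overlapping particle either contributes zero after integration against the product of deltas (because the forced label and the forced geometry are incompatible with a contact that the tree demands), or cancels in pairs against a sibling configuration differing only in one sign, using that the forward flow treats overlapping non-partners identically regardless of $\sigma$. Making this cancellation precise — identifying the involution on $(\mathbf k_n,\bm\sigma_n,\mathbf i_{n+1})$ that pairs up the vanishing-in-the-limit contributions, and checking it is sign-reversing and measure-preserving — is the delicate combinatorial heart of the proof; the regularization $S^\varepsilon_{\mathbf k_n\bm\sigma_n}(t)$ is precisely what makes the involution well-defined by forbidding the infinite accumulation of $-$ nodes at a single real collision time.
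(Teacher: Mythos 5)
Your overall skeleton --- reduce via Proposition~\ref{prop:semigroup} to subintervals containing at most one real collision, handle the collision-free base case with the $n=0$ term, and compute the low-order terms explicitly --- matches the paper's strategy. But your treatment of the terms that do \emph{not} correspond to real collisions has a genuine gap, and it is the heart of the matter. You propose a sign-reversing involution on $(\mathbf k_n,\bm\sigma_n,\mathbf i_{n+1})$ pairing each ``unmatched'' $\sigma=-$ node with a cancelling $\sigma=+$ sibling, and you explicitly flag that you have not constructed it. No such involution is needed, and the cancellation structure you describe is not the one that actually occurs. On a single-collision interval the $\sigma_1=-$ first-order term does not cancel against a $+$ term of the same order: it cancels against the \emph{zeroth-order} free-flow contribution of the two colliding spheres $l,m$ (a gain-minus-loss telescoping, so that $\Phi_0+\Phi_1$ already equals the empirical measure), while every term with $n\ge 2$ vanishes \emph{identically}, because the integration set $A^\varepsilon_{\mathbf k_n\bm\sigma_n}(s)$ simply does not meet $\supp\mu_0^{n+1}$. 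Likewise the accumulation pathology of Fig.~\ref{fig:path} is not killed by signs: configurations such as $(z_l^0,z_m^0,z_m^0)$ are excised from the domain by the $\varepsilon$-time-separation itself (two creations from the same parent at the same instant are forbidden), not cancelled in pairs. Your claim that ``exactly one tree/sign/label configuration per root label survives'' is therefore also inaccurate: for a colliding sphere, three configurations survive and their signed sum gives the right answer.

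The missing ingredient that makes the identical vanishing of the $n\ge2$ terms work is a second property of the time partition, which your ``cut between consecutive collision times'' does not provide: the subintervals must be chosen so short that, in addition to containing at most one real collision (between, say, $l$ and $m$), the $(N-1)$-particle hard-sphere flows obtained by deleting $l$ or deleting $m$ are \emph{free} on that subinterval. Without this, a configuration in which $l$, ignoring $m$ (a $\sigma=-$ node), goes on to reach contact with a third sphere $r$ inside the same subinterval produces nonzero $n\ge2$ contributions, and you would genuinely need the cancellation machinery you postpone --- with no guarantee that it closes. With that extra property, those contributions are zero term by term and the proof reduces to the elementary $\Phi_0+\Phi_1$ computation. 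The fix is concrete: refine the partition to enforce both properties (possible for the assumed full-measure initial data) and replace the involution by the direct verification that $A^\varepsilon_{\mathbf k_n\bm\sigma_n}(s)\cap\supp\mu_0^{n+1}=\emptyset$ for all $n\ge2$.
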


\begin{proof}
Denote by $\mathrm T_n(s)$, $s\in [0,t]$ the $n$-particle hard sphere flow. As before, denote $\mathbf z(s)=\mathrm T_N(s)\mathbf z^0$. Let $t$ be such that the configurations $\left(z^0_i\right)_{i\in\mathcal I_n}$ do not lead to collisions at the instant $t$. 

Let us partition the interval $[0,t)$ in a sequence of intervals $[\theta_j,\theta_{j+1})$, $j=0,\ldots,S-1$, $0=\theta_0<\theta_1<\ldots<\theta_S=t$ with the following properties:

\begin{enumerate}[(i)]
\item In each interval $(\theta_j,\theta_{j+1})$, at most one collision in the hard sphere dynamics starting from $\mathbf z^0$ occurs. There are no collision at the instants $\theta_j$, $j=1,\ldots,S$.

\item Let the particles~$l$ and~$m$ collide in the interval $[\theta_j,\theta_{j+1})$. Then both flows 
$$\mathrm T_{N-1}(s)\left(z_i(\theta_j)\right)_{i\in\{1,\ldots,l-1,l+1,\ldots,N\}},
\quad s\in[0,\theta_{j+1}-\theta_j],$$
and 
$$\mathrm T_{N-1}(s)\left(z_i(\theta_j)\right)_{i\in\{1,\ldots,m-1,m+1,\ldots,N\}},
\quad s\in[0,\theta_{j+1}-\theta_j],$$ 
are free.
\end{enumerate}

Due to the semigroup property (Proposition~\ref{prop:semigroup}), it is sufficient to prove (\ref{eq:weakr}) for a single interval, say, $[0,\theta_1)$. 

The left-hand side of (\ref{eq:weakr}) is, for all $s \in [0,\theta_1]$
\begin{equation}\label{eq:lhs}
\int_{\mathbb R^6}\varphi(x_1,v_1)\mu_s(dx_1dv_1)=
\frac1N\sum_{i=1}^N\varphi(x_i(s),v_i(s)).
\end{equation}
Denote the terms on the right-hand side of (\ref{eq:weakr}) as
$\sum_{n=0}^\infty \Phi_n.$ The $n=0$ term is 
\begin{equation*}
\Phi_0=\int_{\mathbb R^6}\varphi(x_1+v_1s,v_1)\mu_0(dx_1dv_1)=
\frac1N\sum_{i=1}^N\varphi(x_i^0+v_i^0s,v_i^0).
\end{equation*}

If there are no collisions in the interval $[0,s)$, then $x_i(s)=x_i^0+v_i^0s$, $v_i(s)=v_i^0$, and $\Phi_0$ coincides with (\ref{eq:lhs}). Moreover, the higher-order terms vanish since $A^\varepsilon_{\mathbf k_n\bm\sigma_n}(s)$ does not intersect with $\supp \mu_0=\{z_i^0\}_{i=1}^N$ in this case. So, (\ref{eq:microsol}) obviously satisfies (\ref{eq:weakr}) in this interval.

Let now particles~$l$ and~$m$ collide at an instant $\tau\in(0,\theta_1)$. By our hypotheses, all the other particles move freely.
Then
\begin{eqnarray*}
x_l(s)&=&x_l^0+v_l^0\tau+v'_l(s-\tau),\\
x_m(s)&=&x_m^0+v_m^0\tau+v'_m(s-\tau),\\
x_i(s)&=&x_i^0+v_i^0s,\quad i\neq l,m,
\end{eqnarray*}
where $v'_l$ and $v'_m$ are related to $v_l^0$ and $v_m^0$ by (\ref{eq:coll}).
Considering the first-order terms, we have $\mathbf k_1=1$ (the only possibility) and
$$
A^\varepsilon_{\mathbf k_1\bm\sigma_1}\cap\supp\mu_0^2=
\left\lbrace
(z_l^0,z_m^0),(z_m^0,z_l^0)
\right\rbrace
$$
for every $\sigma_1=\pm $. The first-order term in the right-hand side of (\ref{eq:weakr}) is then
\begin{equation*}
\begin{split}
\Phi_1=\:&N\sum_{\sigma_1\in\{\pm \}}
\sigma_1\lim_{\varepsilon\to0}
\int_{A^\varepsilon_{\mathbf k_1\bm\sigma_1}(s)}
\varphi(z_1^F(s,z_1,z_2,k_1,\sigma_1))
\mu^2_0(dz_1dz_2)\\
=\:&\frac1N\left[
\varphi(z_1^F(s,(z_l^0,z_m^0),\mathbf k_1,+))
+\varphi(z_1^F(s,(z_m^0,z_l^0),\mathbf k_1,+))\right.\\
&\:\left.-\varphi(z_1^F(s,(z_l^0,z_m^0),\mathbf k_1,-))
-\varphi(z_1^F(s,(z_m^0,z_l^0),\mathbf k_1,-))
\right]\;.
\end{split}
\end{equation*}
In the case $\bm\sigma_1=+$, the Boltzmann--Enskog forward flow coincides with the hard sphere dynamics, i.e., 
\begin{equation*}
\begin{split}
z_1^F(s,(z_l^0,z_m^0),\mathbf k_1,+)&=z_l(s)\:\:=(x_l^0+v_l^0\tau+v'_l(s-\tau),v'_l),\\
z_1^F(s,(z_m^0,z_l^0),\mathbf k_1,+)&=z_m(s)=(x_m^0+v_m^0\tau+v'_m(s-\tau),v'_m).
\end{split}
\end{equation*}
In the case $\bm\sigma_1=-$, the Boltzmann--Enskog forward flow for the particle~1 coincides with the free dynamics of the particle~1, i.e.,
\begin{equation*}
\begin{split}
z_1^F(s,(z_l^0,z_m^0),\mathbf k_1,-)&=(x_l^0+v_l^0s,v_l),\\
z_1^F(s,(z_m^0,z_l^0),\mathbf k_1,-)&=(x_m^0+v_m^0s,v_m).
\end{split}
\end{equation*}
Hence, 
\begin{equation*}
\Phi_1=
\frac1N\big[\varphi(x_l(s),v_l(s))
+\varphi(x_m(s),v_m(s))
-\varphi(x_l^0+v_l^0s,v^0_l)
-\varphi(x_m^0+v_m^0s,v^0_m)
\big],
\end{equation*}
so that $\Phi_0+\Phi_1$ is equal to (\ref{eq:lhs}).  

The higher-order terms vanish since $A^\varepsilon_{\mathbf k_n\bm\sigma_n}(s)$ for $n\geq2$ does not intersect with $\supp \mu_0$. Consider, for example, $n=2$. Configurations like $(z_i^0,z_j^0,z_r^0)$ with $i\neq j\neq r$ for $\bm\sigma_2=(+,+)$ do not lead to two subsequent collisions, since the Boltzmann--Enskog forward flow coincides with the hard sphere dynamics for this choice of $\bm\sigma_2$. But, by property~(i) of the intervals, hard spheres suffer at most one collision on the interval $[0,\theta_1)$. The same configurations for $\bm\sigma_2=(+,-), (-,+), (-,-)$ do not lead to two collisions by property~(ii): neither particle~$l$ ignoring particle~$m$ collides with another particle nor particle~$m$ ignoring particle~$l$ collides with  another particle; other particles move freely. Configurations like $(z_l^0,z_m^0,z_m^0)\in\supp\mu_0^3$ could be interpreted as configurations leading to two collisions: the first particle collides with the second and the third particles at the same instant $\tau$. However, by definition of 
$A^\varepsilon_{\mathbf k_n\bm\sigma_n}(s)$, the subsequent collisions of the same particle should be $\varepsilon$-separated in time.

Thus, (\ref{eq:microsol}) satisfies (\ref{eq:weakr}) in the interval $[0,\theta_1)$ in the case of a single collision in this interval. This concludes the proof of the theorem. 

\end{proof}

\begin{remark}
Note that Proposition~\ref{prop:converg}, adapted to the present context, guarantees the convergence of series (\ref{eq:weakr}) only for small $\lambda$. However, in case of microscopic solutions, the series has only a finite number of non-vanishing terms, by construction, so that it converges for arbitrary large $\lambda=Na^2$.
\end{remark}

\subsection{Recollisions vs.\,contractions} \label{subsec:RvC}

There is another and more natural way to describe the hard-sphere dynamics for a microscopic state in terms of a series expansion.

The $j$-particle marginals associated to a time evolving particle configuration
$\bar {\mathbf z}_N(t) =\mathrm T_N(t)\bar {\mathbf z}_N(0)=\{\bar z_1(t), \cdots, \bar z_N (t)\}$
are defined by
\begin{equation}
\label{empmar}
\Delta_j  (\mathbf z_j,t )= \frac 1{N(N-1) \cdots (N-j+1)}\sum_{\substack{i_1,\cdots,i_j \\ i_a \neq i_b}} \prod_{s=1}^j   \delta (z_s- \bar z_{i_s} (t) )
\end{equation}
where $ \mathbf z_j = (z_1 \cdots z_j )$. 

For $j=1$ we recover the time-evolved empirical distribution \eqref {eq:microsol}. 

In  \cite{PulvirSimon}   it has been proved that \eqref{empmar} satisfies the following series expansion (BBGKY hierarchy)

\begin{multline}\label{eq:hierint}
\Delta_j(t)=\mathcal S^{int}_{1\ldots j}f_{0,j}+\sum_{n=1}^{N-j}
a^{2n} (N-j)(N-j-1) \dots (N-j-n+1)\\
\int_0^tdt_1\int_0^{t_1}dt_2\ldots\int_0^{t_{n-1}}dt_n\\
\mathcal S^{int}_{1\ldots j}(t-t_1)\mathcal C_{j+1}
\mathcal S^{int}_{1\ldots j+1}(t_1-t_2)\ldots
\mathcal C_{j+n}\mathcal S^{int}_{1\ldots j+n}(t_n)\Delta_{0,j+n},
\end{multline}
where 
$$
\mathcal S^{int}_{1\ldots j}(t) F(\mathbf z_j)=F( \mathrm T_j(-t) \mathbf z_j)=F(  \mathbf z_j (-t))
$$
and $\mathrm T_j$ is the $j$-particle interacting flow.

The validity of the above expansion, for $j=1$ can be compared with the one of the Boltzmann--Enskog equation,
which we rewrite here as
\begin{multline}\label{eq:hiermeas}
\mu_t=\mathcal S_{1} (t) \mu_0+\sum_{n=1}^\infty\lambda^n
\int_0^tdt_1\int_0^{t_1}dt_2\ldots\int_0^{t_{n-1}}dt_n\\
\mathcal S_{1}(t-t_1)\mathcal C_{2}
\ldots
\mathcal C_{1+n}\mathcal S_{1\ldots 1+n}(t_n)\mu_{0,1+n},
\end{multline}
where $\mu_{0,j}=\mu_0^{\otimes j}$. 
Eq.\,\eqref {eq:hiermeas} has to be interpreted according to the notion of weak series solution discussed in this paper.

The right-hand sides of  \eqref {eq:hierint} for $j=1$ and \eqref{eq:hiermeas} must
coincide, being identical the left hand sides. How is this possible? 

First, observe that
$$
\mu_{0,j}=\frac {N(N-1)(N-2) \dots (N-j+1)}{N^j}  \Delta_{0,j}.
$$
Note that the above identity holds on the physical phase space only, namely on the set of configurations
for which $|x_k-x_\ell |>a $ for all $k \neq \ell $. Furthermore we take into account that 
$\lambda=a^2N$ and that the the sum $\sum^{N-1}$ can be replaced by $\sum^{\infty}$ by adding vanishing terms.
Then we conclude that the two series are identical but for the fact that 
$\mathcal S^{int} (t) \neq \mathcal S(t)$. More precisely, the backward flow  $\mathcal S^{int} (t)$ describes also the collisions among the particles already created. Such interactions are usually called `recollisions'. In contrast, $\mathcal S(t)$ is just the free flow which implies that, once two particles are created, if  they arrive at distance $a$ (necessarily with outgoing velocities), they go ahead freely by backward overlapping. 

More precisely, \cite{PulvirSimon} proves the validity of a representation of the form:

\begin{multline}\label{eq:weaktrue}
\int_{\mathbb R^6}\varphi(x_1,v_1)\mu_t(dx_1dv_1)=
\sum_{n=0}^\infty\left(\frac\lambda{a^2}\right)^n
\sum_{\mathbf r_n }
\sum_{\bm\sigma_n\in\{\pm \}^n}\prod_{i=1}^n\sigma_i
\\\times\int_{\bar A_{\mathbf r_n\bm\sigma_n}(t)}\varphi\left(z_1^{FI}(t,\mathbf z_{n+1},\mathbf r_n,\bm\sigma_n)\right)
\mu_0^{\otimes (n+1)}(d\mathbf z_{n+1}),
\end{multline}
where  $z_1^{FI}(t,\mathbf z_{n+1},\mathbf r_n,\bm\sigma_n))$ is the interacting forward flow, taking into account the recollisions.
Here there is no necessity of time separation and, as we have already illustrated, the sum over the fully ordered trees can be replaced by the sum over the partially ordered trees.

\begin{remark}
The above equation must be further clarified (see \cite{PulvirSimon}).
In fact, in contrast with the Boltzmann--Enskog flow, the corresponding $(\mathbf r_n,\bm\sigma_n)$-dependent map
(`interacting backward flow')
\begin{equation}\label{eq:bflowI}
(x_1,v_1,\mathbf t_n,\bm\omega_n,\mathbf v_n)\longmapsto \bm\zeta^{BI} (0),
\end{equation}
of which $\bar A_{\mathbf r_n\bm\sigma_n}(t)$ is the image,
is not globally invertible, but only locally. However \eqref{eq:weaktrue} makes sense if we assume that $\varphi$ has small enough support. We shall make this assumption in the discussion that follows.
\end{remark}

The two representations   \eqref {eq:weaktrue} and \eqref {eq:weakr}
provide indeed the same result.  

Consider the following simple example with $N=4$. The tree associated to \eqref {eq:weaktrue} is given by $ {\bf r}_3= \{1,1,2 \}$ while  the one associated to the Boltzmann--Enskog expansion is ${\bf k}_6=\{2,1,1,1,0,0 \}$, see Fig.\,\ref{fig:esComp}.

\begin{figure}[h]
\centering
\includegraphics[width=10cm]{fig_PST6}
\caption{} \label{fig:esComp}
\end{figure}

The wavy line of the first tree denotes that particles $3$ and $4$ recollide at some time $\tau \in (0,t_3)$ according, say, to the hard-sphere dynamics in the figure.

Such a recollision can be described as well in terms of the Boltzmann--Enskog flow, by a creation of two (fictitious) particles $5$ and $6$. Recalling that $ \frac {\lambda}{a^2}=N=4$, the $5$-th order term of \eqref{eq:weakr} is
then equal to the $3$-th order term of  \eqref {eq:weaktrue}. Indeed the former is
\begin{multline*}
\frac 14 \varphi (\bar z_1(t)) \int \prod_{i=1}^6 dz_i  
\delta (z_1-\bar z_1(0)) \delta (z_2-\bar z_2(0)) \delta (z_3-\bar z_3(0)) \delta (z_4-\bar z_4(0)) \\ 
\times\delta (z_5-\bar z_4(0)) \delta (z_6-\bar z_3(0))=
\frac 14 \varphi (\bar z_1(t))
\end{multline*}
while the second is
\begin{equation*}
\frac 14 \varphi (\bar z_1(t)) \int \prod_{i=1}^4 dz_i 
\delta (z_1-\bar z_1) \delta (z_2-\bar z_2)  \delta (z_3-\bar z_3) \delta (z_4-\bar z_4)=
\frac 14 \varphi (\bar z_1(t))\;.
\end{equation*}

Summarizing we see that the integration in $\prod_{i=1}^6 dz_i $ follows by a reduction of the computation at time zero by means of the map \eqref{map}  associated to the Boltzmann--Enskog backward flow, while the integration  $\prod_{i=1}^4 dz_i $  follows by the corresponding map associated to the interacting backward flow.

This example shows also that a full time separation cannot work as regularization, because it cannot describe recollisions which, in the Boltzmann--Enskog expansion, are given by simultaneous creations of two particles, contracted with previously existing particles of the backward flow.

\begin{remark}
The previous argument shows that Eq.\,\eqref {eq:weaktrue} could be derived by  assuming \eqref {eq:weakr} and vice versa.
\end{remark}

\subsection{Conclusions}

The presented way of giving a rigorous sense to the microscopic solutions of the Boltzmann--Enskog equation appears to be the most natural one, in comparison to the previously proposed variants \cite{TrushPad, TrushQP, TrushMIAN, TrushKRM}, since it does not involve regularizations of delta functions, but gives direct sense to weak solutions of the Boltzmann--Enskog equation by means of a notion of series solution. 
Here we have introduced a series expansion that involves only partial chronological ordering of collisions. The result is recovered by the time separation of collisions corresponding to this partial order.
Thus, the Boltzmann--Enskog equation, which is known to describe irreversible dynamics and entropy production, contains also solutions corresponding to the reversible microscopic dynamics of hard spheres. 

Formula (\ref{eq:weakr})  for weak series solutions is not very handable for practical purposes, but reveals a relation of solutions of the Boltzmann--Enskog equation with the hard sphere dynamics.

\section*{Acknowledgments} The work of A.T. was supported by the grant of the President of the Russian Federation (project MK-2815.2017.1). S.S.\,acknowledges support from DFG grant 269134396.

\medskip
Received xxxx 20xx; revised xxxx 20xx.
\medskip

\end{document}